\documentclass{amsart}
\usepackage[foot]{amsaddr}

\usepackage{fullpage}
\usepackage{amsmath,amssymb,amsthm}
\usepackage{tikz}
\usepackage{hyperref}

\theoremstyle{plain}
\newtheorem{theorem}{Theorem}
\newtheorem{lemma}{Lemma}
\newtheorem{corollary}{Corollary}

\theoremstyle{definition}

\newtheorem{remark}{Remark}
\newtheorem{example}{Example}

\let\leq\leqslant
\let\geq\geqslant

\title{$H$-supermagic labelings for firecrackers, banana trees and flowers}

\author{Rachel Wulan Nirmalasari Wijaya$^1$}
\address{$^1$School of Mathematical and Physical Sciences, University of Newcastle, 2308 NSW, Australia}
\author{Andrea Semani\v{c}ov\'{a}-Fe\v{n}ov\v{c}\'{i}kov\'a$^2$}
\address{$^2$Department of Applied Mathematics and Informatics,  Technical University, Letn\'a 9, Ko\v
  sice, Slovakia}
\author{Joe Ryan$^1$}
\author{Thomas Kalinowski$^1$}

\email{rachel.wijaya@uon.edu.au}
\email{andrea.fenovcikova@tuke.sk}
\email{joe.ryan@newcastle.edu.au}
\email{thomas.kalinowski@newcastle.edu.au}


\begin{document}

\begin{abstract}
  A simple graph $G=(V,E)$ admits an $H$-covering if every edge in $E$ is contained in a subgraph
  $H'=(V',E')$ of $G$ which is isomorphic to $H$. In this case we say that $G$ is $H$-supermagic if
  there is a bijection $f:V\cup E\to\{1,\dotsc,\lvert V\rvert+\lvert E\rvert\}$ such that
  $f(V)=\{1,\dotsc,\lvert V\rvert\}$ and $\sum_{v\in V(H')}f(v)+\sum_{e\in E(H')}f(e)$ is constant
  over all subgraphs $H'$ of $G$ which are isomorphic to $H$. Extending results
  from~\cite{RoswithaBaskoro}, we show that the firecracker $F_{k,n}$ is
  $F_{2,n}$-supermagic, the banana tree $B_{k,n}$ is $B_{k-1,n}$-supermagic and the flower
  $\mathcal{F}_n$ is $C_3$-supermagic.

  \smallskip
  
  \textbf{Keywords.} $H$-supermagic covering, cycle-supermagic covering, firecracker graph, banana
  tree graph, flower graph
\end{abstract}

\maketitle

\section{Introduction}

The graphs considered in this paper are finite, undirected and simple. For a positive integer $n$ we
denote the set $\{1,\dotsc,n\}$ by $[n]$, and for integers $a\leq b$, the set $\{a,\dotsc,b\}$ is
denoted by $[a,b]$. Let $V(G)$ and $E(G)$ be the set of vertices and edges of a graph $G$. A
\emph{graph labeling} is an assignment of integers to the vertices or edges, or both, subject to
certain conditions. Graph labeling was first introduced by Rosa~\cite{rosa} in 1967. Since then
there are various types of labeling that have been studied and developed (see~\cite{gallian}).

For a graph $H$, a graph $G$ admits an \emph{$H$-covering} if every edge of $G$ belongs to at least
one subgraph of $G$ which is isomorphic to $H$.  A graph $G=(V,E)$ which admits an $H$-covering is
called \emph{$H$-magic} if there exists a bijection
$f: V\cup E\to[\lvert V\rvert+\lvert E\rvert]$ and a constant $f(H)$, which we call the
\emph{$H$-magic sum} of $f$, such that $\sum_{v\in V(H')}f(v)+\sum_{e\in E(H')}f(e)=f(H)$ for every
subgraph $H'\subseteq G$ with $H'\cong H$. Additionaly, if $f(V)=[\lvert V\rvert]$
then we say that $G$ is \emph{$H$-supermagic}.

The concept of $H$-supermagic labeling was introduced by Guti\'{e}rrez and Llad\'{o}~\cite{gLlado}
in 2005, for $H$ being a star or a path. In~\cite{lladoMoragas}, Llad\'{o} and Moragas constructed
cycle-supermagic labelings for some graphs. Furthermore, Maryati et al.~\cite{MaryatiBS} studied
path-supermagic labelings while Ngurah et al.~\cite{ngurahSS}, Roswitha et al.~\cite{Chu} and
Kojima~\cite{Kojima} proved that some graphs have cycle-supermagic labelings. Some results for
certain shackles and amalgamations of a connected graph have been proved by Maryati et
al.~\cite{MaryatiSBRM}. Recently, Roswitha and Baskoro~\cite{RoswithaBaskoro} established
$H$-supermagic coverings for some trees.

Roswitha and Baskoro~\cite{RoswithaBaskoro} show that for any integer $k$ and even $n$, the
firecracker graph $F_{k,n}$ is $F_{2,n}$-supermagic and the banana tree graph $B_{k,n}$ is
$B_{(k-1),n}$-supermagic and left the remaining cases as open problems. In this paper, we solve
these two problems. The result for banana trees is an immediate consequence of a theorem about
amalgamations of graphs from~\cite{MaryatiSBRM} which we recall in Section~\ref{sec:bananatree}. The
result for firecrackers in Section~\ref{sec:PkGw} is obtained by a similar method. In addition, we
prove in Section~\ref{sec:flower} that for odd $n$, the flower graph $\mathcal{F}_n$ is
$C_3$-supermagic.

\section{Amalgamations and banana trees}\label{sec:bananatree}

Let $H$ be a graph with $n$ vertices, say $V(H)=\{v_1,\dotsc,v_n\}$ and $m$ edges, say
$E(H)=\{e_1,\dotsc,e_m\}$. Take $k$ copies of $H$ denoted by $H^1,\dotsc,H^k$ and let the vertex
and edge sets be $V(H^i) = \{v_1^i,\dotsc,v_n^i\}$ and $E(H^i)=\{e_1^i,\dotsc,e_m^i\}$. Fix a vertex
$v\in V(H)$, without loss of generality $v=v_n$, and form a graph, $G=A_k(H,v)$ by identifying all
the vertices $v_n^1,\dotsc,v_n^k$ (and denoting the identified vertex by $v_n$). The following
theorem was proved in~\cite{MaryatiSBRM}.
\begin{theorem}[\cite{MaryatiSBRM}]\label{thm:amalgamation}
  Let $H$ be any graph, and let $v\in V(H)$. If $G=A_k(H,v)$ contains exactly $k$ subgraphs
  isomorphic to $H$ then $G$ is $H$-supermagic with $H$-supermagic sums 
  \[f(H)=
    \begin{cases}
      \frac{3(n+m)-1}{2}+\frac{k(n+m-1)^2}{2} & \text{if }(m+n-1)(k-1)\text{ is even},\\
      \frac{3(n+m)-2}{2}+\frac{k\left[(n+m-1)^2+1\right]}{2} & \text{if }(m+n-1)(k-1)\text{ is odd}.
    \end{cases}
\]
\end{theorem}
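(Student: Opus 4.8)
The plan is to reduce the assertion to an arithmetic partition problem and then to solve that problem by an explicit balanced arrangement of the labels. First I would record the parameters: gluing $k$ copies of $H$ along $v_n$ gives $\lvert V(G)\rvert=k(n-1)+1$ and $\lvert E(G)\rvert=km$, hence $N:=k(n+m-1)+1$ labels in all. By hypothesis the only subgraphs isomorphic to $H$ are the copies $H^1,\dots,H^k$, and each $H^i$ consists of the shared vertex $v_n$, the private vertices $v_1^i,\dots,v_{n-1}^i$, and the edges $e_1^i,\dots,e_m^i$. If $S_i$ denotes the sum of the labels on these $n+m-1$ private vertices and edges of $H^i$, then the magic sum of $H^i$ is $f(v_n)+S_i$, so $G$ is $H$-supermagic exactly when all $S_i$ equal a common value $S$, and then $f(H)=f(v_n)+S$. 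The key simplification is that this sum depends only on which labels land in a given copy, not on how they are spread over its vertices and edges. It therefore suffices to fix $f(v_n)$, partition the remaining $k(n-1)$ vertex labels into $k$ blocks of size $n-1$ and the $km$ edge labels into $k$ blocks of size $m$, and match the blocks so that each vertex block and its partner edge block total $S$; assigning the labels of block $i$ to $H^i$ in any type-respecting way then completes the labeling.

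Next I would determine $f(v_n)$ and $S$. Since the labels sum to $\tfrac12N(N+1)$, we need $S=\tfrac1k\bigl(\tfrac12N(N+1)-f(v_n)\bigr)\in\ints$, i.e. $f(v_n)\equiv\tfrac12N(N+1)\pmod k$. Using $N\equiv1\pmod k$ to evaluate $\tfrac12N(N+1)$ modulo $k$, the admissible residue turns out to depend only on the parity of $q(k-1)$, where $q:=n+m-1$: when $q(k-1)$ is even one may take $f(v_n)=1$, and when $q(k-1)$ is odd --- equivalently $q$ odd and $k$ even --- the natural choice is $f(v_n)=\tfrac{k}{2}+1$. Substituting each choice into $f(H)=f(v_n)+S$ and simplifying reproduces the two cases of the stated formula; this is the routine computation underlying the case distinction.

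The substance lies in the matched partition. I would display the labels other than $f(v_n)$ in a $k\times(n+m-1)$ array, rows indexed by copies, the first $n-1$ columns carrying vertex labels and the last $m$ columns edge labels, and aim for constant row sum $S$; row $i$ then supplies the private labels of $H^i$. Two regimes are straightforward. If $k$ is odd, each of the two consecutive-integer blocks can be split into $k$ equal-sum groups (of sizes $n-1$ and $m$) by a standard partition of consecutive integers, and pairing the groups gives constant rows. If $k$ is even and $q$ is even, I would instead pair columns: a column filled with $k$ consecutive integers read downward, together with the next $k$ read upward, contributes the same total to every row, and since $q$ is even the columns pair off --- within each block when $n-1$ and $m$ are both even, or within each block plus one pair straddling the vertex/edge divide when both are odd. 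These two regimes are exactly $q(k-1)$ even, and with $f(v_n)=1$ they yield the first magic sum.

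The hard part will be the remaining parity situation, $q$ odd with $k$ even, where exactly one block has odd width and a single column is left without a partner. A monotone run of $k$ consecutive integers contributes an arithmetic progression rather than a constant to the rows, and a short parity count shows this leftover progression cannot be cancelled by any downward/upward pairing of the other columns. Resolving it requires abandoning the ``one column, one consecutive run'' pattern and distributing the labels of the obstructed block more finely, so that its $k$ group sums form an arithmetic progression which the complementary block is arranged to cancel. This is precisely the regime in which integrality of $S$ forces the choice $f(v_n)=\tfrac{k}{2}+1$, and I expect the removal of this interior label to supply the extra freedom needed to balance the rows; it is also what produces the second magic sum. Checking that the final array is a bijection onto $\{1,\dots,N\}\setminus\{f(v_n)\}$ with constant row sum is the last and most computational step, after which reading off the rows gives the required $H$-supermagic labeling.
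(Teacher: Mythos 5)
Your architecture matches the paper's: reduce to partitioning the labels other than $f(v_n)$ into $k$ equal-sum blocks of $n-1$ vertex labels and $m$ edge labels each, fix $f(v_n)$ by the divisibility condition $f(v_n)\equiv\tfrac12N(N+1)\pmod k$ (which correctly yields $1$ when $(n+m-1)(k-1)$ is even and $k/2+1$ otherwise, and your back-substitution into $f(H)=f(v_n)+S$ does reproduce both stated sums), and realize the partition by reading runs of $k$ consecutive labels alternately downward and upward along the columns of a $k\times(n+m-1)$ array. Your two ``straightforward regimes'' are essentially the paper's Cases 1 and 2; for $q:=n+m-1$ odd and $k$ odd the paper instantiates your ``standard partition'' by distributing the labels $2,\dotsc,3k+1$ over the first three columns with constant row contribution and zigzagging the remaining even number of columns. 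One small caveat: splitting each block \emph{separately} into $k$ equal-sum groups is impossible when a block has width $1$ (e.g.\ $n=2$ or $m=1$), so even for $k$ odd you must allow a pair of columns to straddle the vertex/edge divide, as the paper's column formulas do automatically.

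The genuine gap is the case $(n+m-1)(k-1)$ odd, i.e.\ $q$ odd and $k$ even --- exactly the second line of the stated formula and the hard half of the theorem. You correctly diagnose that one column resists the downward/upward pairing and that $f(v_n)=k/2+1$ must be removed from the interior of the label range, but you never produce the balanced arrangement: ``I expect the removal of this interior label to supply the extra freedom needed'' is a conjecture, not a construction. The paper resolves this concretely: with $f(v_n)=k/2+1$, the $3k$ labels $1,\dotsc,k/2,\,k/2+2,\dotsc,3k+1$ are spread over the first three columns so that every row receives the constant $(9k+8)/2$ (namely $f(v_1^j)=j$ or $j+1$, $f(v_2^j)=3k/2+1+j$ or $k/2+1+j$, $f(v_3^j)=3(k+1)-2j$ or $4k+2-2j$ according as $j\leq k/2$ or $j>k/2$), after which the remaining $q-3$ columns, now even in number, are paired as before. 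Note also that the paper makes the exceptional three columns \emph{constant}-sum rather than, as you propose, giving one block an arithmetic progression of group sums to be cancelled by the other; either could work, but until you exhibit such a gadget and verify bijectivity and constant row sums, the second magic sum is unproven.
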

For the convenience of the reader we provide an explicit description of the labeling.
\begin{proof}
 The graph $A_k(H,v)$ has $k(n-1)+1$ vertices and $km$ edges. We define the labeling
  \[f:V(A_k(H,v))\cup E(A_k(H,v))\to[k(n+m-1)+1]\]
  as follows.
  \begin{description}
  \item[Case 1.] If $n+m$ is odd, we start with $f(v_n)=1$. Then we use the labels
    $2,\dotsc,k(n-1)+1$ for the remaining vertices:
    \begin{align}
      f(v_i^j) &=
                 \begin{cases}
                   1+(i-1)k+j &\text{if }i\text{ is odd},\\
                   ik+2-j&\text{if }i\text{ is even},
                 \end{cases}
                              && \text{for }i\in[n-1],\,j\in[k].     \label{eq:vertices}
    \end{align}
    Finally we use the labels $k(n-1)+2,\dotsc,k(n+m-1)+1$ for the edges:
    \begin{align}
      f(e_i^j) &=
      \begin{cases}
        1 +(i+n-2)k+j & \text{if }i+n-1\text{ is odd},\\
        (i+n-1)k+2-j & \text{if }i+n-1\text{ is even},
      \end{cases}
      && \text{for }i\in[n-1],\,j\in[k].     \label{eq:edges}
    \end{align}
    The sum of the labels used for $H^j$ is independent of $j$:
    \begin{multline*}
      f(v_n)+\sum_{i=1}^{n-1}f(v_i^j)+\sum_{i=1}^{m}f(e_i^j)=1+\sum_{i=1,\,i\text{
          odd}}^{n+m-1}\left[1+(i-1)k+j\right]+\sum_{i=1,\,i\text{
          even}}^{n+m-1}\left[ik+2-j\right]\\
      =\frac{3(n+m)-1}{2}+\frac{k(n+m-1)^2}{2}.
    \end{multline*}
\item[Case 2.] If $n+m$ is even and $k$ is odd, we start with $f(v_n)=1$. Next we use the labels
  $2,\dotsc,3k+1$ to label the vertices $v_i^j$ for $i\in[3]$, $j\in[k]$ (assuming that $n\geq 4$,
  otherwise use the first edges in the obvious way):
  \begin{align*}
    f(v_1^j) &= 1+j \\
    f(v_2^j) &= 
               \begin{cases}
                 3(k+1)/2+j &\text{for }j\in[(k-1)/2],\\
                 (k+3)/2+j &\text{for }j\in[(k+1)/2,\,k],
               \end{cases}\\
    f(v_3^j) &=
               \begin{cases}
                 3k+2-2j &\text{for }j\in[(k-1)/2],\\
                 4k+2-2j &\text{for }j\in[(k+1)/2,\,k].
               \end{cases}               
  \end{align*}
  Then we use the labels $3k+2,\dotsc,k(n-1)+1$ for the remaining vertices, applying~(\ref{eq:vertices}) for
  $i\in[4,n-1]$. Finally, we use the labels $k(n-1)+2,\dotsc,k(n+m-1)+1$ for the edges,
  applying~(\ref{eq:edges}). The sum of the labels used for $H^j$ is independent of $j$:
  \begin{multline*}
    f(v_n) + \sum_{i=1}^3f(v_1^j)+ \sum_{i=4}^{n-1}f(v_i^j)+\sum_{i=1}^{m}f(e_i^j)\\ =
    1+\frac{9(k+1)}{2}+ \sum_{i=4,\,i\text{
          odd}}^{n+m-1}\left[1+(i-1)k+j\right]+\sum_{i=4,\,i\text{
          even}}^{n+m-1}\left[ik+2-j\right]\\
      =\frac{3(n+m)-1}{2}+\frac{k(n+m-1)^2}{2}.
  \end{multline*}
\item[Case 3.] If $n+m$ is even and $k$ is even, we start with $f(v_n)=k/2+1$. Next we use the labels
  $1,\dotsc,k/2,\,k/2+2,\dotsc,3k+1$ to label the vertices $v_i^j$ for $i\in[3]$, $j\in[k]$ (assuming that $n\geq 4$,
  otherwise use the first edges in the obvious way):
 \begin{align*}
    f(v_1^j) &=
               \begin{cases}
                 j & \text{for }j\in[k/2],\\
                 j+1& \text{for }j\in[k/2+1,\,k],
               \end{cases}\\
    f(v_2^j) &= 
               \begin{cases}
                 3k/2+1+j &\text{for }j\in[k/2],\\
                 k/2+1+j &\text{for }j\in[k/2+1,\,k],
               \end{cases}\\
    f(v_3^j) &=
               \begin{cases}
                 3(k+1)-2j &\text{for }j\in[k/2],\\
                 4k+2-2j &\text{for }j\in[k/2+1,\,k].
               \end{cases}               
  \end{align*}
  Then we use the labels $3k+2,\dotsc,k(n-1)+1$ for the remaining vertices, applying~(\ref{eq:vertices}) for
  $i=4,\dotsc,n-1$. Finally, we use the labels $k(n-1)+2,\dotsc,k(n+m-1)+1$ for the edges,
  applying~(\ref{eq:edges}). The sum of the labels used for $H^j$ is independent of $j$:
  \begin{multline*}
    f(v_n) + \sum_{i=1}^3f(v_1^j)+ \sum_{i=4}^{n-1}f(v_i^j)+\sum_{i=1}^{m}f(e_i^j)\\
    =(k/2+1)+ \frac{9k+8}{2} + \sum_{i=4,\,i\text{
          odd}}^{n+m-1}\left[1+(i-1)k+j\right]+\sum_{i=4,\,i\text{
          even}}^{n+m-1}\left[ik+2-j\right]\\
      =\frac{3(n+m)-2}{2}+\frac{k\left[(n+m-1)^2+1\right]}{2}. \qedhere
  \end{multline*}
  \end{description}
\end{proof}

Let $H$ be the graph obtained by taking a star with $n$ vertices and connecting an additional vertex
$v$ to exactly one leaf of the star. The \emph{banana tree} $B_{k,n}$ is the graph $A_k(H,v)$.
\begin{corollary}
  For any integers $k$ and $n\geq k+2$, the banana tree $B_{k,n}$ is $B_{1,n}$-supermagic.
\end{corollary}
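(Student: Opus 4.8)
The plan is to recognize that $B_{1,n}$ is precisely the graph $H$ from the amalgamation construction, so that $B_{k,n}=A_k(H,v)$ and the corollary becomes a direct application of Theorem~\ref{thm:amalgamation}. The only hypothesis of that theorem that is not automatic is the requirement that $A_k(H,v)$ contain \emph{exactly} $k$ subgraphs isomorphic to $H$; all the work will go into verifying this, and the condition $n\geq k+2$ is exactly what makes it true.

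First I would fix notation for $H=B_{1,n}$: it is the tree on $n+1$ vertices consisting of a star with center $c$ and leaves $\ell_1,\dots,\ell_{n-1}$, together with a pendant vertex $v$ attached to the leaf $\ell_1$. Thus in $H$ the center $c$ has degree $n-1$, the vertex $\ell_1$ has degree $2$, and the remaining $n-1$ vertices ($v$ and $\ell_2,\dots,\ell_{n-1}$) have degree $1$. Forming $A_k(H,v)$ identifies the $k$ pendant vertices into a single root $r$, so in $B_{k,n}$ the root $r$ has degree $k$, each copy $c^j$ of the center has degree $n-1$, each $\ell_1^j$ has degree $2$, and every other vertex is a leaf.

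The heart of the argument is a degree count. In any subgraph $H'\cong H$ the image of the center $c$ must be a vertex of degree at least $n-1$ in $B_{k,n}$. The only candidates are the $c^j$ and possibly $r$; but $r$ has degree $k$, and the hypothesis $n\geq k+2$ gives $k\leq n-2<n-1$, ruling $r$ out. Hence every copy of $H$ is centered at some $c^j$. Since $c^j$ has exactly $n-1$ neighbors in $B_{k,n}$, namely $\ell_1^j,\dots,\ell_{n-1}^j$, the star part of $H'$ is forced to use all of them. Finally the pendant vertex $v$ of $H'$ must be adjacent to one of these neighbors and distinct from $c^j$; among $\ell_1^j,\dots,\ell_{n-1}^j$ only $\ell_1^j$ has a further neighbor (the root $r$), so $v$ must be $r$ and the whole embedding is determined. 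This shows that each $c^j$ is the center of precisely one copy of $H$, giving exactly $k$ copies in total, and these are the $H^1,\dots,H^k$ of the amalgamation.

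With the ``exactly $k$ copies'' condition established, Theorem~\ref{thm:amalgamation} applies verbatim and yields an $H$-supermagic, i.e. $B_{1,n}$-supermagic, labeling of $B_{k,n}$. I expect the degree argument ruling out $r$ to be the only genuinely delicate point; for the smallest value $n=3$ (forcing $k=1$) the graph $H$ degenerates to $P_4$, in which the center is no longer distinguished by degree, but that case is trivial since $B_{1,3}$ contains exactly one copy of itself, while for $n\geq 4$ one has $n-1\geq 3>2$, so the degree-$2$ vertices $\ell_1^j$ cannot serve as a center either. Everything else is bookkeeping, and the explicit labeling together with the magic sum is already supplied by the proof of Theorem~\ref{thm:amalgamation}.
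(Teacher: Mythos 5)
Your proposal is correct and follows exactly the route the paper intends: identify $B_{k,n}$ with $A_k(B_{1,n},v)$ and invoke Theorem~\ref{thm:amalgamation}, with the hypothesis $n\geq k+2$ serving only to guarantee exactly $k$ copies of $B_{1,n}$ (a point the paper asserts without proof and your degree-count argument verifies cleanly). No discrepancies to report.
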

The condition $n\geq k+2$ is needed because otherwise $B_{k,n}$ contains more than $k$ subgraphs
isomorphic to $B_{1,n}$. We do not have this problems for $H=B_{\ell,n}$ with $\ell\geq 2$, and
therefore we get the following result.
\begin{corollary}
  For any integers $n$, $k$ and $\ell\in[2,k-1]$, the banana tree $B_{k,n}$ is
  $B_{\ell,n}$-supermagic. In particular, for $\ell=k-1$, this solves the open problem
  in~\cite{RoswithaBaskoro}.
\end{corollary}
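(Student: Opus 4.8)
The plan is to reuse the explicit labeling constructed in the proof of Theorem~\ref{thm:amalgamation}, rather than to invoke the theorem a second time with a new block. Concretely, I would regard $B_{k,n}$ as the amalgamation $A_k(H_0,v)$ of $k$ single bananas $H_0=B_{1,n}$, where $H_0$ is a star $K_{1,n-1}$ with a pendant $v$ attached to one leaf, and take the labeling $f$ supplied by that proof. Since $H_0$ has $n+1$ vertices and $n$ edges, the relevant parity is that of $(n+1)+n=2n+1$, which is odd, so we are in Case~1, whence $f(v)=1$. The one feature of $f$ that I will need is already established inside the proof: the total label sum over each copy $H^j$ is a constant $W$ independent of $j$. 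Writing $\beta:=W-f(v)$ for the common sum of the labels lying strictly inside a single banana (that is, excluding the root $v$), this balance among the $k$ bananas is the whole point.

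The substance of the argument is a structural lemma: for $\ell\geq 2$, every subgraph $K\subseteq B_{k,n}$ with $K\cong B_{\ell,n}$ is the union of the root $v$ together with $\ell$ of the $k$ complete bananas. I would prove this by a degree analysis. The $\ell$ star-centers of $K$ have degree $n-1$ inside $K$, hence degree $\geq n-1$ in $B_{k,n}$, so each of them is either some $c_j$ or the root $v$, these being the only vertices of large degree. The decisive step, and the place where the hypothesis $\ell\geq 2$ enters and the restriction $n\geq k+2$ of the preceding corollary becomes unnecessary, is the exclusion of $v$ as a star-center: if $v$ were a center, its special leaf would be some $l_1^{j_0}$, which forces the root of $K$ to be $c_{j_0}$; but then the remaining $\ell-1\geq 1$ special leaves would have to be further neighbors of $c_{j_0}$, all of which have degree $1$ in $B_{k,n}$ and so cannot be adjacent to any other center, a contradiction. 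Hence all $\ell$ centers lie among $c_1,\dots,c_k$; since each such $c_j$ already has exactly $n-1$ neighbors in $B_{k,n}$, the copy $K$ must contain all of them, which pins the special leaf to $l_1^j$ and the root to $v$. This identifies the copies of $B_{\ell,n}$ with the $\binom{k}{\ell}$ choices of $\ell$ bananas.

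Granting this characterization, the labeling $f$ is automatically $B_{\ell,n}$-supermagic: the weight of the copy supported on a banana set $S$ with $\lvert S\rvert=\ell$ is $f(v)+\sum_{j\in S}\beta=f(v)+\ell\beta$, independent of $S$. The covering condition is immediate, since for $\ell\leq k-1$ every edge of every banana lies in some copy, obtained by extending that banana with any $\ell-1$ of the remaining $k-1$ bananas. I would then record the supermagic sum by substituting the block parameters $n+1$ and $n$ into $W$ from Theorem~\ref{thm:amalgamation}, which gives $W=2kn^2+3n+1$ and a $B_{\ell,n}$-supermagic sum of $1+\ell\bigl(2kn^2+3n\bigr)$; taking $\ell=k-1$ settles the case left open in~\cite{RoswithaBaskoro}.

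I expect the structural lemma of the second paragraph, in particular the clean exclusion of $v$ as a star-center once $\ell\geq 2$, to be the only genuine obstacle; once it is in place, the supermagic property is a formal consequence of the banana-balance of $f$ already proved for Theorem~\ref{thm:amalgamation}.
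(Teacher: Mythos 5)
Your proof is correct and follows essentially the same route as the paper, which offers no explicit argument beyond the remark that for $\ell\geq 2$ the ``too many copies'' obstruction of the previous corollary disappears; your structural lemma together with the banana-balance of the Case~1 labeling from Theorem~\ref{thm:amalgamation} is exactly the content that remark leaves implicit, and your supermagic sum $1+\ell(2kn^2+3n)$ is consistent with the theorem's formula (it reduces to $2kn^2+3n+1$ at $\ell=1$). The only quibble is that your degree count (only $v$ and the star centres have degree at least $n-1$) tacitly assumes $n\geq 4$, a restriction that the corollary's ``any integers $n$'' glosses over anyway.
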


\begin{remark}
   Note that the labeling strategy in the first case of the proof of Theorem~\ref{thm:amalgamation}
  immediately gives the following result. Fix an induced subgraph $H'$ of $H$, say induced by the
  last $\ell$ vertices, and form a graph, $G=A_k(H,H')$ by identifying the vertices
  $v_i^1,\dotsc,v_i^k$ for $i=n-\ell+1,\dotsc,n$. If $n-\ell+\lvert E(H)\setminus E(H')\rvert$
  is even and $G$ contains exactly $k$ subgraphs isomorphic to $H$, then $G$ is $H$-supermagic.
\end{remark}

\section{Attaching copies of a fixed graph to a path}\label{sec:PkGw}

Let $G$ be a graph with $n$ vertices, say $V(G)=\{v_1,\dotsc,v_n\}$ and $m$ edges, say
$E(G)=\{e_1,\dotsc,e_m\}$. Let $P_k$, $k\geq2$, be a path with vertex set
$V(P_k)=\{w_1,w_2,\dotsc,w_k\}$ and edge set $E(P_k)=\{w_1w_2,\dotsc,w_{k-1}w_k\}$. Take $k$ copies
of $G$ denoted by $G^1,G^2,\dotsc,G^k$ and let the vertex and edge sets be
$V(G_i) = \{v_1^i,\dotsc,v_n^i\}$ and $E(G^i)=\{e_1^i,\dotsc,e_m^i\}$. Fix a vertex $v\in V(G)$,
without loss of generality $v=v_n$, and attach the copies of $G$ to the path such that the vertex
$v_n^i\in V(G^i)$ is identified with the vertex $w_i$ in $P_k$, $i=1,2,\dotsc,k$. The resulting
graph is denoted by $P_k(G,v)$.

\begin{theorem}\label{thm:supermagic_labeling}
  Let $G$ be graph with $n$ vertices and $m$ edges and let $k\geq 2$ be an integer. If $(n+m-1)(k-1)$ is even
  and $P_k(G,v)$ contains exactly $k-1$ subgraphs isomorphic to $P_2(G,v)$, then $P_k(G,v)$ is
  $P_2(G,v)$-supermagic with supermagic sum $(n+m)[(n+m+1)k+1]+\left\lceil k/2\right\rceil$.
\end{theorem}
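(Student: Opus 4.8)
The plan is to use the fact that every subgraph of $P_k(G,v)$ isomorphic to $P_2(G,v)$ is a pair of consecutive copies $G^j,G^{j+1}$ together with the single path edge $w_jw_{j+1}$ joining them; there are exactly $k-1$ of these, indexed by $j\in[k-1]$, which is the content of the hypothesis. Writing $T_j$ for the sum of all vertex and edge labels of the copy $G^j$ and $P_j=f(w_jw_{j+1})$, the weight of such a subgraph is $T_j+T_{j+1}+P_j$, so the whole problem is to construct a labeling making $T_j+T_{j+1}+P_j$ independent of $j$.

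First I would split the labels of each copy into an interior part and an attachment part: let $A_j$ be the sum of the labels on the vertices $v_i^j$ ($i\in[n-1]$) and all copy edges $e_i^j$, and let $B_j=f(w_j)=f(v_n^j)$, so that $T_j=A_j+B_j$ and the weight becomes $A_j+A_{j+1}+B_j+B_{j+1}+P_j$. The crucial step is to label the interiors so that $A_j=A$ is constant in $j$; the weight then collapses to $2A+\bigl(B_j+B_{j+1}+P_j\bigr)$, leaving only the attachment vertices and path edges to be balanced. Making $A_j$ constant is exactly the problem solved in Theorem~\ref{thm:amalgamation}: the interior consists of $n+m-1$ rows to which the zig-zag labeling~(\ref{eq:vertices})--(\ref{eq:edges}) applies (after shifting the labels, since the attachment vertices no longer carry the label $1$). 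If $n+m-1$ is even the plain zig-zag already cancels the $j$-dependence, as in Case~1; if $n+m-1$ is odd then the hypothesis forces $k$ to be odd, which is Case~2, where modifying the first rows restores constancy. The excluded combination --- $n+m-1$ odd together with $k$ even --- is precisely Case~3, and this is exactly why the parity condition $(n+m-1)(k-1)$ even is imposed.

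It then remains to place the $k$ attachment-vertex labels and the $k-1$ path-edge labels, which can be taken to be a block of $k$ consecutive small labels and a block of $k-1$ consecutive large labels, so that $B_j+B_{j+1}+P_j$ is constant along the path. This is nothing but a super edge-magic labeling of $P_k$, which is well known to exist for every $k$: one arranges the attachment labels so that the consecutive sums $B_j+B_{j+1}$ run through $k-1$ consecutive integers and pairs these in reverse order with the path-edge labels. The two parities of $k$ give slightly different arrangements and account for the $\left\lceil k/2\right\rceil$ term. Combining the two parts, every subgraph weight equals $2A$ plus this path constant, hence is independent of $j$, and a direct bookkeeping of the four label blocks confirms the bijection and evaluates the sum to $(n+m)\bigl[(n+m+1)k+1\bigr]+\left\lceil k/2\right\rceil$.

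The main obstacle I foresee is the interface between the two blocks rather than either block in isolation: the interior labeling imported from Theorem~\ref{thm:amalgamation} must use exactly the labels left unused by the path block, the Case~2 modification of the first rows must stay compatible with this reservation of labels, and the path labeling must be chosen with the correct $k$-parity so that the two constants combine to the exact value claimed and not merely to some constant.
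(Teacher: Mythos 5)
Your proposal follows essentially the same route as the paper: a zig-zag labeling of the copy interiors (imported from the proof of Theorem~\ref{thm:amalgamation}, with the two cases driven by the parity of $n+m-1$ and $k$) making the interior sums constant, combined with a super edge-magic-type labeling of the path so that $f(w_i)+f(w_{i+1})+f(w_iw_{i+1})$ is constant, which is exactly where the $\left\lceil k/2\right\rceil$ term arises. The label-block bookkeeping you flag as the remaining work is handled in the paper precisely as you describe, with the path vertices taking $[k]$, the interiors taking the middle block, and the path edges taking the top $k-1$ labels.
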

\begin{proof}
  The graph $P_k(G,v)$ has $kn$ vertices and $(m+1)k-1$ edges. We define the labeling
  \[f:V(P_k(G,v))\cup E(P_k(G,v))\to[(m+n+1)k-1]\]
  as follows. For the path vertices, we use the first $k$ labels $1,\dotsc,k$:
    \[f(w_i)=
      \begin{cases}
        (i+1)/2 & \text{if $i$ is odd},\\
        \left\lceil k/2\right\rceil + i/2 & \text{if $i$ is even},
      \end{cases} \qquad\text{for }i\in[k].
    \]
  For the path edges, we use the the labels in $[(m+n)k+1,(m+n)k+(k-1)]$:
    \[f(w_iw_{i+1})=(n+m+1)k-i\qquad\text{for $i\in[k-1]$}.\]  
  For labeling the remaining elements we distinguish two cases.
  \begin{description}
  \item[Case 1.] If $n+m-1$ is even, we set
    \begin{align}
      \label{eq:vertices_2}
      f\left(v^i_j\right) &=
      \begin{cases}
        jk +i & \text{if $j$ is odd,}\\
        (j+1)k+1-i& \text{if $j$ is even,}\\
      \end{cases} && \text{for }j\in[n-1],\,i\in[k],\\     
    f\left(e^i_j\right) &=
      \begin{cases}
        (n-1+j)k +i & \text{if $n-1+j$ is odd,}\\
        (n+j)k+1-i& \text{if $n-1+j$ is even,}\\
      \end{cases} && \text{for }j\in[m],\,i\in[k].\label{eq:edges_2}
    \end{align}
  \item[Case 2.] If $n+m-1$ is odd and $k$ is odd, we set
    \begin{align*}
     f(v_1^i) &= k+i \\
    f(v_2^i) &= 
               \begin{cases}
                 (5k+1)/2+i &\text{for }i\in[(k-1)/2],\\
                 (3k+1)/2+i &\text{for }i\in[(k+1)/2,\,k],
               \end{cases}\\
    f(v_3^i) &=
               \begin{cases}
                 4k+1-2i &\text{for }i\in[(k-1)/2],\\
                 5k+1-2i &\text{for }i\in[(k+1)/2,\,k].
               \end{cases}
    \end{align*}
  \end{description}
  As in the proof of Theorem \ref{thm:amalgamation}, (\ref{eq:vertices_2}) and (\ref{eq:edges_2})
  are used for labeling the remaining vertices and edges. Denoting the sum of the labels used for
  $G^i$ by $A_i$, we obtain
  \begin{equation*}
    A_i = \sum_{j=1}^{n-1}f(v_j^i)+\sum_{j=1}^{m}f(e_j^i)+f(w_i)
    =\frac{(n+m-1)(n+m+1)k+(n+m-1)}{2}+
    \begin{cases}
      (i+1)/2 & \text{if $i$ is odd},\\
      \left\lceil k/2\right\rceil + i/2 & \text{if $i$ is even}.
    \end{cases}
  \end{equation*}        
  Finally, the sum of the labels of the subgraph isomorphic to $P_2(G,v)$ which is formed by $G^i$,
  $G^{i+1}$ and the edge $w_iw_{i+1}$ is independent of $i$:
  \[A_i+A_{i+1}+f(w_iw_{i+1})=(n+m)[(n+m+1)k+1]+\left\lceil k/2\right\rceil.\qedhere\]
\end{proof}
\begin{remark}
  We think that it might be possible that the parity assumption in
  Theorem~\ref{thm:supermagic_labeling} is not necessary, and we leave the case that both $n+m$ and
  $k$ are even for future work.
\end{remark}
\begin{example}\label{ex:1}
  We illustrate the construction in Theorem~\ref{thm:supermagic_labeling} for $k=5$, $G=K_4^-$ (the
  graph obtained from a complete graph on $4$ vertices by deleting one edge) and $v$ being a vertex
  of degree $3$ in $G$. We obtain the $P(K_4^-,v_4)$-supermagic labeling shown in
  Figure~\ref{fig:labeling}.
  	\begin{figure}[htb]
		\centering
		\begin{tikzpicture}[xscale=1.24,yscale=1.55,every node/.style={draw,shape=circle,outer sep=2pt,inner sep=2pt,minimum size=.5cm}]
			\scriptsize
			\foreach \j in {1,...,3}{
				\pgfmathtruncatemacro{\jk}{2*\j-1}
				\node (v\jk) at (5*\j-5,0) 	{\j};
				\pgfmathtruncatemacro{\label}{4+2*\j}
				\node (w\jk1) at (5*\j-6.2,1.3)	{\label};
				\pgfmathtruncatemacro{\label}{17-2*\j}
				\node (w\jk2) at (5*\j-5,2)	{\label};
				\pgfmathtruncatemacro{\label}{14+2*\j}
				\node (w\jk3) at (5*\j-3.8,1.3)	{\label};
				\pgfmathtruncatemacro{\label}{24+2*\j}
				\draw[thick] (v\jk) to node[draw=none,fill=none,below,near end] {\label} (w\jk1);
				\pgfmathtruncatemacro{\label}{34+2*\j}
				\draw[thick] (v\jk) to node[draw=none,fill=none,left=-4pt] {\label} (w\jk2);
				\pgfmathtruncatemacro{\label}{27-2*\j}
				\draw[thick] (v\jk) to node[draw=none,fill=none,below,near end] {\label} (w\jk3);
				\pgfmathtruncatemacro{\label}{37-2*\j}
				\draw[thick] (w\jk1) to node[draw=none,fill=none,above] {\label} (w\jk2);
				\pgfmathtruncatemacro{\label}{47-2*\j}
				\draw[thick] (w\jk2) to node[draw=none,fill=none,above] {\label} (w\jk3);
			}
			\foreach \j in {1,2}{
				\pgfmathtruncatemacro{\jk}{2*\j}
				\pgfmathtruncatemacro{\label}{\j+3}
				\node (v\jk) at (5*\j-2.5,0) 	{\label};
				\pgfmathtruncatemacro{\label}{5+2*\j}
				\node (w\jk1) at (5*\j-3.7,-1.3)	{\label};
				\pgfmathtruncatemacro{\label}{16-2*\j}
				\node (w\jk2) at (5*\j-2.5,-2)	{\label};
				\pgfmathtruncatemacro{\label}{15+2*\j}
				\node (w\jk3) at (5*\j-1.3,-1.3)	{\label};
				\pgfmathtruncatemacro{\label}{25+2*\j}
				\draw[thick] (v\jk) to node[draw=none,fill=none,above,near end] {\label} (w\jk1);
				\pgfmathtruncatemacro{\label}{35+2*\j}
				\draw[thick] (v\jk) to node[draw=none,fill=none,left=-4pt] {\label} (w\jk2);
				\pgfmathtruncatemacro{\label}{26-2*\j}
				\draw[thick] (v\jk) to node[draw=none,fill=none,above,near end] {\label} (w\jk3);
				\pgfmathtruncatemacro{\label}{36-2*\j}
				\draw[thick] (w\jk1) to node[draw=none,fill=none,below] {\label} (w\jk2);
				\pgfmathtruncatemacro{\label}{46-2*\j}
				\draw[thick] (w\jk2) to node[draw=none,fill=none,below] {\label} (w\jk3);
			}
			\foreach \j in {1,...,4}{
				\pgfmathtruncatemacro{\next}{\j+1}
				\pgfmathtruncatemacro{\label}{50-\j}
				\draw[thick] (v\j) to node[draw=none,fill=none,below=-4pt] {\label} (v\next);
			}
		\end{tikzpicture}
		\caption{$P_2(K_4^-,v_4)$-supermagic labeling for $P_5(K_4^-,v_4)$.}\label{fig:labeling}
	\end{figure}
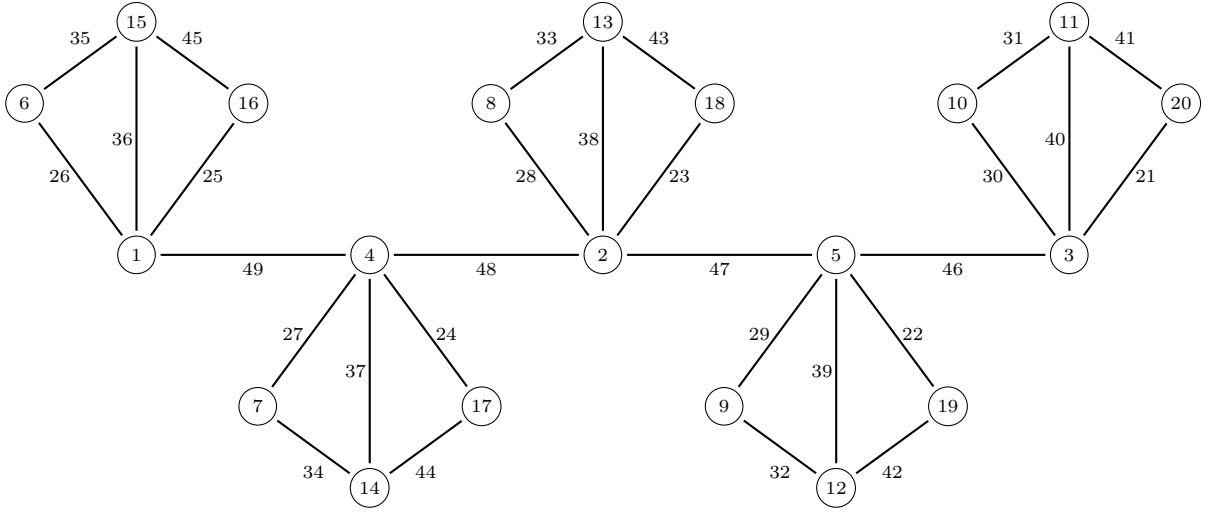
\end{example}

\begin{corollary}
  Let $G=K_{1,n-1}$ be a star with $n\geq 4$ vertices, and let $v$ be a pendant vertex of $G$. The
  firecracker graph is $F_{k,n}=P_k(G,v)$. Since $\lvert V(G)\rvert+\lvert E(G)\rvert=2n-1$ is odd,
  and there are exactly $k-1$ subgraphs isomorphic to $F_{2,n}$, the firecracker $F_{k,n}$ is
  $F_{2,n}$-supermagic with supermagic sum $(2n-1)(2nk+1)+\left\lceil k/2\right\rceil$.
\end{corollary}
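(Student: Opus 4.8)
The plan is to obtain this as a direct application of Theorem~\ref{thm:supermagic_labeling}, so essentially all of the work is checking that its two hypotheses hold for the choice $G=K_{1,n-1}$ with $v$ a pendant vertex. First I would record the parameters: the star $G=K_{1,n-1}$ has $n$ vertices and $m=n-1$ edges, and by the construction of $P_k(G,v)$ the firecracker $F_{k,n}$ is precisely $P_k(G,v)$, since gluing the distinguished leaf $v=v_n^i$ of the $i$-th star to the path vertex $w_i$ reproduces the standard description of the firecracker. With these parameters, $n+m-1=2n-2=2(n-1)$ is even for every $n$, hence $(n+m-1)(k-1)$ is even regardless of the parity of $k$; this is exactly the parity hypothesis of the theorem, and it explains why the remark ``$2n-1$ is odd'' in the statement is enough to verify it.

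The only substantial step is showing that $P_k(G,v)$ contains exactly $k-1$ subgraphs isomorphic to $P_2(G,v)=F_{2,n}$. The $k-1$ expected copies are the consecutive ones, formed from $G^i$, $G^{i+1}$ and the linking edge $w_iw_{i+1}$ for $i\in[k-1]$, and the hard part is to rule out any others. I would argue this with a structural invariant, using that $F_{k,n}$ is a tree. Call a vertex \emph{central} if it has degree $n-1$ and is adjacent to at least $n-2$ vertices of degree $1$. In $F_{k,n}$ the central vertices are exactly the $k$ star centres $c^1,\dotsc,c^k$: each $c^i$ has degree $n-1$ with $n-2$ pendant neighbours, whereas every path vertex has degree $2$ or $3$, and even when $n=4$ forces a path vertex to have degree $n-1=3$ it fails the test, since none of its neighbours is a leaf. (For $n\geq 5$ the degree alone already isolates the centres, which is where $n\geq 4$ enters.) Now $F_{2,n}$ has exactly two central vertices, lying at distance $3$ in the unique path joining them, so any copy of $F_{2,n}$ inside the tree $F_{k,n}$ must send these to two centres $c^i,c^j$; since $F_{k,n}$ is a tree, the unique $c^i$–$c^j$ path has length $j-i+2$, which must equal $3$, forcing $j=i+1$. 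The remainder of the copy (the path $c^iw_iw_{i+1}c^{i+1}$ together with the $2(n-2)$ pendant leaves at $c^i,c^{i+1}$) is then determined, so the copies are exactly the $k-1$ consecutive ones.

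With both hypotheses verified, the conclusion is a substitution: Theorem~\ref{thm:supermagic_labeling} gives that $F_{k,n}$ is $F_{2,n}$-supermagic with supermagic sum $(n+m)[(n+m+1)k+1]+\lceil k/2\rceil$, and putting $n+m=2n-1$ turns $(n+m+1)k$ into $2nk$, producing $(2n-1)(2nk+1)+\lceil k/2\rceil$, as claimed. I expect the subgraph count to be the only real obstacle; once that structural fact is in place, the parity check and the evaluation of the sum are just bookkeeping already handled by the theorem.
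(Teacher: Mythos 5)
Your overall route is the same as the paper's: the corollary is obtained by substituting $n+m=2n-1$ into Theorem~\ref{thm:supermagic_labeling}, and the paper itself gives no argument for the hypothesis that $P_k(G,v)$ contains exactly $k-1$ subgraphs isomorphic to $F_{2,n}$ --- it is simply asserted. So the one substantive step in your proposal is precisely the step the paper omits, and it contains a genuine gap. Your notion of a \emph{central} vertex is defined via degrees in the host graph $F_{k,n}$, but an isomorphism from $F_{2,n}$ onto a subgraph $T'$ only controls degrees inside $T'$: a leaf of the copy need not be a leaf of $F_{k,n}$. From ``$x$ is a degree-$(n-1)$ vertex of the copy'' you may conclude only that $\deg_{F_{k,n}}(x)\geq n-1$; the clause ``adjacent to at least $n-2$ vertices of degree $1$'' does not transfer to $x$. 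For $n\geq 5$ this is harmless, since $n-1\geq 4$ exceeds the degree of every path vertex, so the two high-degree vertices of any copy must be star centres $c^i,c^j$, and your distance argument then correctly forces $j=i+1$ and pins down the copy.

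For $n=4$, however, the argument does not just have a hole --- the count itself is wrong. An internal path vertex $w_i$ has host degree $3=n-1$, and for example in $F_{4,4}$ the subgraph consisting of the pendant edges $w_2w_1$ and $w_2c^2$, the path $w_2w_3w_4c^4$, and the two pendant leaves of the star centre $c^4$ is isomorphic to $F_{2,4}$ but is not one of the $k-1$ consecutive copies; its leaves $w_1$ and $c^2$ are not leaves of the host, which is exactly the case your test overlooks. Hence for $n=4$ and $k\geq 4$ there are strictly more than $k-1$ copies, the hypothesis of Theorem~\ref{thm:supermagic_labeling} fails, and the corollary cannot be derived this way; this appears to be an unaddressed issue in the paper as well. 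Your proof is correct once restricted to $n\geq 5$ (with the invariant replaced by ``host degree at least $n-1$''); for $n=4$ one would have to either verify that the additional copies also receive the magic sum or restrict the statement.
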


\section{$C_3$-Supermagic Labeling of the Flower Graph $\mathcal{F}_n$}\label{sec:flower}

A \emph{flower} graph $\mathcal{F}_n$ is constructed from a wheel $W_n$ by adding $n$ vertices, each
new vertex adjacent to one vertex on the cycle and the center of the wheel with vertex set
$V=\{x_0\}\cup\{x_i:1\leq i\leq n\}\cup\{y_i:1\leq i\leq n\}$ and edge set
$E=\{x_0x_i\,:\,1\leq i\leq n\}\cup\{x_0y_i\,:\,1\leq i\leq n\}\cup\{x_iy_i\,:\,1\leq i\leq
n\}\cup\{x_ix_{i+1}\,:\,1\leq i\leq n\},$ where indices are interpreted modulo $n$ in the obvious
way.

We consider four permutations $\pi_1,\dotsc,\pi_4$ of the set $[n]$, and define a total labeling of
the flower graph $\mathcal{F}_n$ as follows.
\begin{align*}
f(x_0)	&=	n+1,\\
f(x_i)	&=	\pi_1(i)	&& \text{for }i\in[n],\\
f(y_i)	&=	\pi_2(i)+n+1	&& \text{for }i\in[n],\\
f(x_0x_i)	&= 	\pi_2(i)+5n+1	&& \text{for }i\in[n],\\
f(x_0y_i)	&= 	\pi_2(i)+4n+1	&& \text{for }i\in[n],\\
f(x_iy_i)	&= 	\begin{cases}
\pi_3(i)+2n+1 & \hbox{odd $i$}\\
\pi_3(i)+3n+1 & \hbox{even $i$}
\end{cases}	&& \text{for }i\in[n],\\
f(x_ix_{i+1}) &= \pi_4(i)+2n+1+\frac{n+1}{2} && \text{for }i\in[n-1].
\end{align*}

\begin{lemma}\label{lem:sumflower}
  Define
  \begin{align*}
    \varphi_1^i(\pi_1,\dotsc,\pi_4)	&= \pi_1(i)+\pi_1(i+1)+\pi_2(i)+\pi_2(i+1)+\pi_4(i)+\frac{n+1}{2}-1,\\
    \varphi_2^i(\pi_1,\dotsc,\pi_4)	&= \pi_1(i)+3\pi_2(i)+\pi_3(i)+
                                          \begin{cases}
                                            n & \text{if $i$ is even},\\
                                            0 & \text{if $i$ is odd}.
                                          \end{cases}
  \end{align*}
  If $\varphi_k^i(\pi_1,\dotsc,\pi_4)$ is equal to a constant $\varphi$ for all $i\in[n]$ and
  $k\in\{1,2\}$, then the labeling given above is $C_3$-supermagic with supermagic sum
  $f(C_3)=13n+5+\varphi.$
\end{lemma}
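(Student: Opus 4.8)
The plan is to establish the three defining features of a $C_3$-supermagic labeling separately: that $f$ is a bijection onto $[6n+1]$, that $f(V)=[2n+1]$, and that every triangle of $\mathcal{F}_n$ carries the common label sum $13n+5+\varphi$. First I would record the relevant combinatorial data. The graph $\mathcal{F}_n$ has $2n+1$ vertices and $4n$ edges, and for odd $n\geq 5$ its triangles fall into exactly two families: the \emph{petal} triangles on $\{x_0,x_i,y_i\}$ with edges $x_0x_i,x_0y_i,x_iy_i$, and the \emph{fan} triangles on $\{x_0,x_i,x_{i+1}\}$ with edges $x_0x_i,x_0x_{i+1},x_ix_{i+1}$, $i\in[n]$ and indices mod $n$. (The hypothesis $n\geq 5$ rules out the extra rim triangle that would appear for $n=3$; every vertex $y_i$ has only $x_0$ and $x_i$ as neighbours, so it lies in no other triangle.) Checking that each of the four edge types lies in at least one such triangle shows $\mathcal{F}_n$ admits a $C_3$-covering.

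Next I would confirm that $f$ is a legitimate total labeling. On the vertices this is immediate: $x_0\mapsto n+1$, the $x_i$ receive $\pi_1([n])=[n]$, and the $y_i$ receive $\pi_2([n])+(n+1)=[n+2,2n+1]$, so $f(V)=[2n+1]$. On the edges the spokes $x_0x_i$ occupy $[5n+2,6n+1]$ and the edges $x_0y_i$ occupy $[4n+2,5n+1]$ bijectively, since $\pi_2$ is a permutation. The only delicate point is that the petal edges $x_iy_i$ and the cycle edges $x_ix_{i+1}$ must jointly tile the remaining block $[2n+2,4n+1]$: splitting on the parity of $i$, the odd-$i$ values $\pi_3(i)+2n+1$, the even-$i$ values $\pi_3(i)+3n+1$, and the cycle values $\pi_4(i)+2n+1+\tfrac{n+1}{2}$ must be pairwise distinct and exhaust the block. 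I would verify this from the range structure that the construction forces on $\pi_3$ and $\pi_4$, treating the wrap-around edge $x_nx_1$ (for which the displayed formula runs only to $i=n-1$) as the single case needing its own leftover label.

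The heart of the argument is the sum computation, where the hypothesis enters. For a petal triangle I would add the six labels and collect constants, obtaining
\[
f(x_0)+f(x_i)+f(y_i)+f(x_0x_i)+f(x_0y_i)+f(x_iy_i)=\pi_1(i)+3\pi_2(i)+\pi_3(i)+c_i+(13n+5),
\]
where $c_i\in\{0,n\}$ according to the parity of $i$ is precisely the correction term in $\varphi_2^i$, so the total equals $13n+5+\varphi_2^i=13n+5+\varphi$. For a fan triangle the same bookkeeping yields
\[
f(x_0)+f(x_i)+f(x_{i+1})+f(x_0x_i)+f(x_0x_{i+1})+f(x_ix_{i+1})=\varphi_1^i+1+(13n+4)=13n+5+\varphi.
\]
Since both families produce the common value $13n+5+\varphi$, the labeling is $C_3$-supermagic with the claimed sum.

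I expect the main obstacle to be the bijectivity bookkeeping rather than these sum identities, which become routine once the constant $13n+5$ is pinned down. One must be certain that the interleaving of the odd/even petal-edge labels with the shifted cycle-edge labels genuinely tiles $[2n+2,4n+1]$, and in particular that the wrap-around edge $x_nx_1$ is assigned the unique missing label while still contributing correctly to $\varphi_1^n$. The restriction to odd $n$ enters exactly here, through the integrality of $\tfrac{n+1}{2}$ and the balance between the $(n+1)/2$ odd and $(n-1)/2$ even indices.
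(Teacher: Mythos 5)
Your proposal matches the paper's proof of this lemma: the paper likewise splits the triangles into the fan cycles $(x_0,x_i,x_{i+1})$ and the petal cycles $(x_0,x_i,y_i)$ and carries out exactly the two sum computations you describe, arriving at $13n+5+\varphi_1^i$ and $13n+5+\varphi_2^i$ respectively. The bijectivity/tiling verification you add is not part of the paper's argument, and note that it cannot in fact be ``forced'' by the lemma's hypothesis alone --- constancy of $\varphi_k^i$ does not by itself make the $x_iy_i$ and $x_ix_{i+1}$ labels tile $[2n+2,4n+1]$, so that check properly belongs with the concrete permutations supplied in the subsequent lemma --- but your core argument is the same as the paper's.
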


\begin{proof}
  The flower graph $F_n$ contains $2n$ subgraphs $H_1, \dotsc, H_{2n}$ isomorphic to $C_3$. We
  distinguish two types of 3-cycles: (1) cycles induced by vertex sets $\{x_0,\,x_i,\,x_{i+1}\}$,
  and (2) cycles induced by vertex sets $\{x_0,\,x_i,\,y_i\}.$
  \begin{description}
  \item[Case 1.] Cycle $(x_0,x_i,x_{i+1})$. The sum of the vertex labels is
    \begin{equation}\label{eq:c1vertex}
      f(x_0)+f(x_i)+f(x_{i+1})=n+1+\pi_1(i)+\pi_1(i+1),
    \end{equation}
    and the sum of the edge labels is
    \begin{equation}\label{eq:c1edge}
      f(x_0x_i)+f(x_ix_{i+1})+f(x_0x_{i+1})=\pi_2(i)+\pi_2(i+1)+\pi_4(i)+12n+3+\frac{n+1}{2}.
    \end{equation}
    Taking the sum of~(\ref{eq:c1vertex}) and~(\ref{eq:c1edge}), we have the supermagic sum
    \begin{align*}
      f(C_3)	&=	n+1+\pi_1(i)+\pi_1(i+1)+\pi_2(i)+\pi_2(i+1)+\pi_4(i)+12n+3+\frac{n+1}{2}\\
                &= 13n+5+\left[\pi_1(i)+\pi_1(i+1)+\pi_2(i)+\pi_2(i+1)+\pi_4(i)-1+\frac{n+1}{2}\right]\\
                &= 13n+5+\varphi_1^i(\pi_1,\dotsc,\pi_4) = 13n+5+\varphi.
    \end{align*}
  \item[Case 2.] Cycle $(x_0,x_i,y_i)$. The sum of the vertex labels is
    \begin{equation}\label{eq:c2vertex}
      f(x_0)+f(x_i)+f(y_i)=2n+2+\pi_1(i)+\pi_2(i),
    \end{equation}
    and the sum of the edge labels is
    \begin{equation}\label{eq:c2edge}
      f(x_0x_i)+f(x_iy_i)+f(x_0y_i)=\pi_2(i)+\pi_2(i)+\pi_3(i)+
      \begin{cases}
        11n+3 &\text{if $i$ is odd},\\
        12n+3 &\text{if $i$ is even}.
      \end{cases}
      	\end{equation}
    Taking the sum of~(\ref{eq:c2vertex}) and~(\ref{eq:c2edge}) gives the supermagic sum
    \[f(C_3) = 13n+5+\varphi_2^i(\pi_1,\dotsc,\pi_4)= 13n+5+\varphi.\qedhere\]
  \end{description}
\end{proof}

In the following lemma we provide permutations $\pi_1,\dotsc,\pi_4$ which satisfy the condition in
Lemma~\ref{lem:sumflower}.

\begin{lemma}\label{lem:permutFlower}
  Define the permutations by
  \begin{align*}
    \pi_1(i)	&=	i,\\
    \pi_2(i)	&=	n+1-
                  \begin{cases}
                    (i+1)/2				& \hbox{for odd $i$,}\\
                    i/2+(n+1)/2 & \hbox{for even $i$,}
                  \end{cases}\\
    \pi_3(i) &=
               \begin{cases}
                 (i+1)/2				& \hbox{for odd $i$,}\\
                 i/2+(n+1)/2 & \hbox{for even $i$,}
               \end{cases}\\
    \pi_4(i) &= n+1-i.
  \end{align*}
  Then for every $i\in[n]$ we have
  $\varphi^i_1(\pi_1,\dotsc,\pi_4)=\varphi^i_2(\pi_1,\dotsc,\pi_4)=3n+2.$
\end{lemma}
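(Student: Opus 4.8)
The statement is a pure verification: one must show that the four explicit permutations force both quantities $\varphi_1^i$ and $\varphi_2^i$ to collapse to the single constant $3n+2$. Throughout I would use that $n$ is odd (the standing hypothesis for the flower graph), so that $(n+1)/2$ is an integer. The one structural fact that organizes the entire computation is that $\pi_2$ and $\pi_3$ are complementary, $\pi_2(i)+\pi_3(i)=n+1$ for every $i\in[n]$, which is immediate by comparing their defining case distinctions. Everything beyond this is bookkeeping of the odd/even cases.

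I would dispose of $\varphi_2^i$ first, since it involves only the single index $i$. Substituting $\pi_1(i)=i$ and $\pi_2(i)=(n+1)-\pi_3(i)$ into the definition of $\varphi_2^i$ yields
\[
  \varphi_2^i = i + 3(n+1) - 2\pi_3(i) + \begin{cases} n & \text{$i$ even},\\ 0 & \text{$i$ odd},\end{cases}
\]
so the problem reduces to a one-line check in each parity class. For odd $i$ one has $\pi_3(i)=(i+1)/2$, and the resulting term $-2\pi_3(i)=-(i+1)$ exactly cancels the linear part, leaving $3n+2$; for even $i$ the value $\pi_3(i)=i/2+(n+1)/2$ together with the extra summand $+n$ again produces $3n+2$.

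For $\varphi_1^i$ I would treat the five contributions separately. The consecutive pair gives $\pi_1(i)+\pi_1(i+1)=2i+1$ and the rim permutation gives $\pi_4(i)=n+1-i$, so the parts that are linear in $i$ are already arranged to cancel against each other. The only piece requiring care is $\pi_2(i)+\pi_2(i+1)$: among $i$ and $i+1$ exactly one index is odd and one is even, and I would verify that in both sub-cases (whether $i$ is odd or even) this sum equals $2n+1-i-(n+1)/2$, so that the value is insensitive to which of the two indices is even. Adding the five pieces together with the constant $(n+1)/2-1$, the terms in $i$ and the two half-integer terms cancel, again leaving $3n+2$.

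The one genuinely separate point, and the place I expect the only real obstacle, is the cyclic index $i=n$, where $i+1$ is read modulo $n$ as $1$. Since $n$ is odd, both $n$ and $1$ are odd, so the ``one odd, one even'' dichotomy used for the generic $\varphi_1^i$ breaks down and this case must be computed directly. I would verify it by hand from $\pi_1(n)=n$, $\pi_1(1)=1$, $\pi_2(n)=(n+1)/2$, $\pi_2(1)=n$ and $\pi_4(n)=1$, which once more gives the value $3n+2$. This wraparound case is the only subtlety; the remainder of the argument is routine parity bookkeeping.
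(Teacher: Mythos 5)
Your verification is correct, and since the paper states this lemma without any proof (it is left as a routine check), your direct case analysis is exactly the computation the authors leave implicit; all the arithmetic checks out, including the key identity $\pi_2(i)+\pi_3(i)=n+1$ and the reduction of $\pi_2(i)+\pi_2(i+1)$ to $\tfrac{3n+1}{2}-i$ in the generic case. You are also right that the wraparound case $i=n$ (where both $n$ and $1$ are odd, so the one-odd-one-even pattern fails) is the only genuine subtlety, and your hand computation $\varphi_1^n=(n+1)+\tfrac{n+1}{2}+n+1+\tfrac{n+1}{2}-1=3n+2$ settles it.
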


\begin{theorem}
  For any odd integer $n$, the flower graph $\mathcal{F}_n$ is $C_3$-supermagic.
\end{theorem}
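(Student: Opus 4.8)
The plan is to exhibit the explicit total labeling $f$ described just before Lemma~\ref{lem:sumflower}, instantiated with the four permutations supplied by Lemma~\ref{lem:permutFlower}, and then to verify only that $f$ is a genuine supermagic labeling of the required form; the two preceding lemmas already take care of the fact that every $C_3$ receives the same weight. First I would record why the hypothesis that $n$ is odd is exactly what makes the construction meaningful: the formulas for $\pi_2$, $\pi_3$, $\pi_4$ and for $f(x_ix_{i+1})$ all involve the quantity $\tfrac{n+1}{2}$, which is an integer precisely when $n$ is odd, so for odd $n$ every label is a well-defined integer.

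Next I would check that $\pi_1,\dotsc,\pi_4$ really are permutations of $[n]$. This is immediate for $\pi_1(i)=i$ and $\pi_4(i)=n+1-i$, and for $\pi_2,\pi_3$ it follows from the parity split: the odd indices $i\in\{1,3,\dotsc,n\}$ produce the values $\{1,\dotsc,\tfrac{n+1}{2}\}$ and the even indices $i\in\{2,4,\dotsc,n-1\}$ produce the complementary block $\{\tfrac{n+3}{2},\dotsc,n\}$, so each $\pi_k$ is a bijection of $[n]$.

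The heart of the remaining work is to confirm that $f$ is a bijection from $V\cup E$ onto $[6n+1]$ with $f(V)=[2n+1]$, where $\mathcal{F}_n$ has $2n+1$ vertices and $4n$ edges. For the vertices one sees at once that the labels of the $x_i$ fill $\{1,\dotsc,n\}$, that $f(x_0)=n+1$, and that the labels of the $y_i$ fill $\{n+2,\dotsc,2n+1\}$, so $f(V)=[2n+1]$. For the edges I would partition the remaining range $\{2n+2,\dotsc,6n+1\}$ into the blocks coming from the five edge families and check that they are disjoint and exhaust the range; using the parity description of $\pi_3$ one gets interleaved blocks for $f(x_iy_i)$ according to the parity of $i$, while $f(x_0y_i)$ and $f(x_0x_i)$ occupy the two top blocks $\{4n+2,\dotsc,5n+1\}$ and $\{5n+2,\dotsc,6n+1\}$. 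The one point needing care is the cyclic edge $x_nx_1=x_nx_{n+1}$: the stated formula for $f(x_ix_{i+1})$ should be read for all $i\in[n]$ with indices mod $n$, and evaluating it at $i=n$ (where $\pi_4(n)=1$) produces exactly the single value $\tfrac{5n+5}{2}$ that is otherwise missing from the tiling, so the edge labels fill $\{2n+2,\dotsc,6n+1\}$ without repetition.

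Finally, with $f$ established as a vertex-minimal total labeling, Lemma~\ref{lem:permutFlower} gives $\varphi_1^i(\pi_1,\dotsc,\pi_4)=\varphi_2^i(\pi_1,\dotsc,\pi_4)=3n+2$ for every $i\in[n]$, so the hypothesis of Lemma~\ref{lem:sumflower} holds with $\varphi=3n+2$; Lemma~\ref{lem:sumflower} then yields that $\mathcal{F}_n$ is $C_3$-supermagic with supermagic sum $13n+5+\varphi=16n+7$. The only genuine obstacle is the bookkeeping of the edge-label blocks --- in particular getting the parity-dependent interleaving right and accounting for the wraparound edge $x_nx_1$; once the blocks are seen to tile $\{2n+2,\dotsc,6n+1\}$, the theorem is an immediate combination of the two preceding lemmas.
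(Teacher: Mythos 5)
Your proposal is correct and takes essentially the same route as the paper: instantiate the labeling with the permutations of Lemma~\ref{lem:permutFlower} and invoke Lemma~\ref{lem:sumflower} to obtain the supermagic sum $13n+5+(3n+2)=16n+7$. You additionally verify that $f$ is a bijection onto $[6n+1]$ with $f(V)=[2n+1]$ and handle the wraparound edge $x_nx_1$ (which receives the otherwise missing label $\tfrac{5n+5}{2}$) --- details the paper's proof leaves implicit, but not a different argument.
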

\begin{proof}
  The total labeling of $\mathcal{F}_n$ can be obtained by applying the permutations in Lemma~\ref{lem:permutFlower} to
  the labeling construction. Using the value of $\varphi$ in Lemma~\ref{lem:permutFlower} and the
  supermagic the sum of the permutations in Lemma~\ref{lem:permutFlower}, we have the constant
  supermagic sum on flower graph $\mathcal{F}_n$
  \[  f(C_3)	= 13n+5+\varphi= 13n+5+3n+2= 16n+7.\]  
  Hence, flower graph $F_n$ is $C_3$-supermagic, for odd $n$.
\end{proof}

\begin{example}
  Using Lemma \ref{lem:permutFlower} for $n=7$ we get the permutations $\pi_1=(1,2,3,4,5,6,7)$,
  $\pi_2=(7,3,6,2,5,1,4)$, $\pi_3=(1,5,2,6,3,7,4)$ and $\pi_4=(7,6,5,4,3,2,1)$. These permutations
  give the labeling for $\mathcal{F}_7$ shown in Figure~\ref{fig:flower7}.

\begin{figure}[htb]
	\centering
	\begin{tikzpicture}[scale=1.8,every node/.style={draw,shape=circle,outer sep=2pt,inner sep=1pt,minimum
			size=.4cm}]
		\small
		\node (v0) at (0,0) {$8$};
		
		\node (v1) at (0,1) 	{$1$};
		\node (v2) at (.8,.64) 	{$2$};
		\node (v3) at (1,-.24) 	{$3$};
		\node (v4) at (.44,-.92) {$4$};
		\node (v5) at (-.44,-.92) {$5$};
		\node (v6) at (-1,-.24) {$6$};
		\node (v7) at (-.8,.64)	{$7$};
		
		\node (u1) at (0,2) 	{$15$};
		\node (u2) at (1.6,1.22) {$11$};
		\node (u3) at (1.98,-.48) {$14$};
		\node (u4) at (.86,-1.84) {$10$};
		\node (u5) at (-.86,-1.84){$13$};
		\node (u6) at (-1.98,-.48) {$9$};
		\node (u7) at (-1.6,1.22) {$12$};
		
		\draw[thick] (v0) to node[draw=none,right=3pt,above=-5pt,sloped] {$43$} (v1);
		\draw[thick] (v0) to node[draw=none,right=3pt,above=-5pt,sloped] {$39$} (v2);
		\draw[thick] (v0) to node[draw=none,right=3pt,above=-5pt,sloped] {$42$} (v3);
		\draw[thick] (v0) to node[draw=none,right=3pt,above=-4pt,sloped] {$38$} (v4);
		\draw[thick] (v0) to node[draw=none,right=-3pt,below=-4pt,sloped] {$41$} (v5);
		\draw[thick] (v0) to node[draw=none,right=-3pt,below=-4pt,sloped] {$37$} (v6);
		\draw[thick] (v0) to node[draw=none,right=-3pt,below=-4pt,sloped] {$40$} (v7);
		
		\draw[thick] (v1) to node[draw=none,left=-3pt,above=-4pt,sloped] {$26$} (v2);
		\draw[thick] (v2) to node[draw=none,left=-3pt,above=-4pt,sloped] {$25$} (v3);
		\draw[thick] (v3) to node[draw=none,left=3pt,below=-4pt,sloped] {$24$} (v4);
		\draw[thick] (v4) to node[draw=none,left=3pt,below=-4pt,sloped] {$23$} (v5);
		\draw[thick] (v5) to node[draw=none,left=3pt,below=-4pt,sloped] {$22$} (v6);
		\draw[thick] (v6) to node[draw=none,left=-3pt,above=-4pt,sloped] {$21$} (v7);
		\draw[thick] (v7) to node[draw=none,left=-3pt,above=-4pt,sloped] {$20$} (v1);
		
		\draw[thick] (v1) to node[draw=none,left=-3pt] {$16$} (u1);
		\draw[thick] (v2) to node[draw=none,above=-4pt,sloped] {$27$} (u2);
		\draw[thick] (v3) to node[draw=none,above=-4pt,sloped] {$17$} (u3);
		\draw[thick] (v4) to node[draw=none,above=-4pt,sloped] {$28$} (u4);
		\draw[thick] (v5) to node[draw=none,below=-4pt,sloped] {$18$} (u5);
		\draw[thick] (v6) to node[draw=none,below=-4pt,sloped] {$29$} (u6);
		\draw[thick] (v7) to node[draw=none,below=-4pt,sloped] {$19$} (u7);
		
		\draw[thick] (v0) to[out=360/7+12,in=6*360/7-8] 	node[draw=none,right=4pt,above=5pt] {$36$} (u1);
		\draw[thick] (v0) to[out=12,in=5*360/7-8]			node[draw=none,left=-10pt,below=-5pt,sloped] {$32$} (u2);
		\draw[thick] (v0) to[out=6*360/7+12,in=4*360/7-8] 	node[draw=none,left=-10pt,below=-5pt,sloped] {$35$} (u3);
		\draw[thick] (v0) to[out=5*360/7+12,in=3*360/7-8] 	node[draw=none,left=-10pt,below=-5pt,sloped] {$31$} (u4);
		\draw[thick] (v0) to[out=4*360/7+12,in=2*360/7-8]	node[draw=none,right=-12pt,above=-6pt,sloped] {$34$} (u5);
		\draw[thick] (v0) to[out=3*360/7+12,in=360/7-8]		node[draw=none,right=-12pt,above=-6pt,sloped] {$30$} (u6);
		\draw[thick] (v0) to[out=2*360/7+12,in=-8] 			node[draw=none,right=-12pt,above=-6pt,sloped] {$33$} (u7);
	\end{tikzpicture}
	\caption{$C_3$-supermagic labeling of $\mathcal{F}_7$ with supermagic sum $f(C_3)=119$.}\label{fig:flower7}
\end{figure}
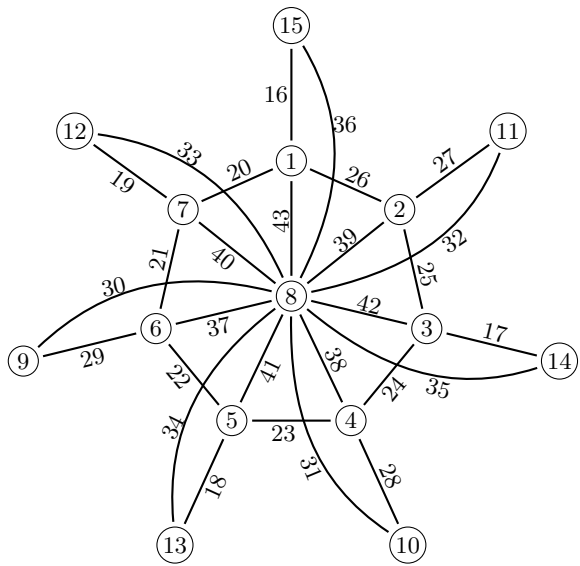
\end{example}

\subsubsection*{Acknowledgement}
The first author was supported by the Indonesian government under the Indonesia Endowment Fund for
Education (LPDP) Scholarship.  The second author was supported by APVV-15-0116 and
by VEGA 1/0385/17.


\providecommand{\bysame}{\leavevmode\hbox to3em{\hrulefill}\thinspace}
\providecommand{\MR}{\relax\ifhmode\unskip\space\fi MR }
\providecommand{\MRhref}[2]{%
  \href{http://www.ams.org/mathscinet-getitem?mr=#1}{#2}
}
\providecommand{\href}[2]{#2}

\end{document}